\newcommand{\om}{\bm{\omega}}
\newcommand{\h}{\bm{h}}
\newcommand{\bt}{\bar{\tau}}
\newcommand{\bw}{\bm{W}}
\newcommand{\bz}{\bm{z}}
\newcommand{\bA}{\bm{A}}
\newcommand{\bB}{\bm{B}}
\newcommand{\be}{\bm{\epsilon}}
\newcommand{\tRe}{\textnormal{Re}}
\newcommand{\hath}{\hat{\bm{h}}}
\newcommand{\bep}{\bm{\epsilon}}
\newcommand{\hR}{\hat{R}}
\newtheorem{proposition}{Proposition}
\newtheorem{Lemma}{Lemma}
\newtheorem{Remark}{Remark}
\begin{document}
	\title{Joint Robust Beamforming Design for WPT-assisted D2D Communications in MISO-NOMA: Fractional Programming and Deep Reinforcement Learning \\
		{\footnotesize }
		\author{Shiyu Jiao, Fang Fang,~\IEEEmembership{Member,~IEEE} and Zhiguo Ding,~\IEEEmembership{Fellow,~IEEE}\vspace{-0.9cm}
			\thanks{Shiyu Jiao and Zhiguo Ding are with School of Electrical and Electronic Engineering, The University of Manchester, M13 9PL, U.K. (e-mail: shiyu.jiao@manchester.ac.uk and zhiguo.ding@manchester.ac.uk).
				
				Fang Fang is with the Department of Electrical and Computer Engineering and the Department of Computer Science, Western University, London, ON N6A 3K7, Canada. (e-mail: fang.fang@uwo.ca).}}}
	
	\maketitle	
	\begin{abstract}
		This paper proposes a scheme for the envisioned sixth-generation (6G) ultra-massive Machine Type Communications(umMTC). In particular, wireless power transfer (WPT) assisted communication is deployed in non-orthogonal multiple access (NOMA) downlink networks to realize spectrum and energy cooperation. This paper focuses on joint robust beamforming design to maximize the energy efficiency of WPT-assisted D2D communications in multiple-input single-output (MISO)-NOMA downlink networks. To efficiently address the formulated non-concave energy efficiency maximization problem, a pure fractional programming (PFP) algorithm is proposed, where the time switching coefficient of the WPT device and the beamforming vectors of the base station are alternatively optimized by applying the Dinkelbach method and quadratic transform respectively. To prove the optimality of the proposed algorithm, the partial exhaustive search algorithm is proposed as a benchmark. A deep reinforcement learning (DRL)-based method is also applied to directly solve the non-concave problem. The proposed PFP algorithm and the DDPG-based algorithm are compared in the presence of different channel estimation errors. Simulation results show that the proposed PFP algorithm outperforms the DDPG-based algorithm if perfect channel state information (CSI) can be obtained or just have minor errors, while the DDPG-based algorithm is more robust when the channel estimation accuracy is unsatisfactory. On the other hand, one can conclude that the NOMA scheme can provide a higher gain than OMA on the energy efficiency of the WPT-assisted D2D communication in legacy multi-user downlink networks. 
	\end{abstract}
	\begin{IEEEkeywords}
		non-orthogonal multiple access (NOMA), wireless power transfer (WPT),  Device-to-Device (D2D), convex optimization, deep reinforcement learning (DRL)
	\end{IEEEkeywords}	
	\section{Introduction}
	With the development of wireless communication from fifth-generation (5G) to sixth-generation (6G), the demand for massive machine-type communications (mMTC) is raised to ultra-mMTC (umMTC) \cite{zhang20196g}. The emergence of new usage scenarios and applications, such as the Internet of Things (IoT), dramatically drove this upgrade. However, simultaneously serving massive devices by utilizing the limited spectrum resource is challenging. In the meanwhile, the ultra-dense networks formed by massively connected devices lead to huge power consumption, which significantly increases the operating cost of wireless communication networks. Thus, a spectrum and energy efficient solution that enables ultra-dense networks is urgent and critical.  
	
	To support ultra-dense networks in 5G and 6G, D2D communication has still been regarded as a promising scheme and will be gradually appended to existing cellular networks \cite{hashim2019ultra,gismalla2022survey,duong2019ultra}. D2D communication was introduced in 4G LET as a kind of peer-to-peer short wireless transmission between \texttt{}devices without relaying by base stations (BS) or access points (AP)\cite{zhang2020envisioning, jayakumar2021review}, which can mitigate the load on the BSs. Generally, D2D communication is classified into two categories: Inband D2D and Outband D2D\cite{jayakumar2021review}. Inband D2D communication utilizes the same licensed spectrum in cellular networks with cellular devices such as mobile phones. For outband D2D communication, it occurs in Ad-hoc networks such as Wi-Fi, Bluetooth etc., which is out of the scope of this paper. In terms of licensed spectrum utilization, there are two ways to assign the spectrum to D2D devices, namely Underlay and Overlay \cite{elsawy2014analytical,pei2013resource}. The Underlay type allows the licensed spectrum to be shared with both D2D devices and original cellular devices while the Overlay type divides the licensed spectrum into two parts and allocates them to cellular devices and D2D devices respectively. Although D2D has been widely studied in existing works \cite{lin2014overview,asadi2014survey,jameel2018survey}, it still has many challenges that demand prompt solutions \cite{gismalla2022survey, zhang2020envisioning}. For example, if deploying battery-powered D2D pairs in a legacy cellular network, resource allocation, interference controlling and energy efficiency improvement, etc., are required to enhance the network's performance and prolong the D2D pair's battery life.

	To further improve the spectrum efficiency of the D2D pair, non-orthogonal multiple access (NOMA) can be applied in cellular networks \cite{kimy2013non, ding2015application, ding2014performance}. In 6G, NOMA remains in the spotlight and is ever-evolved in academia and industry. In recent studies on using NOMA, the authors of \cite{liu2021application} and \cite{yuan2021noma} demonstrated its enormous potential in 6G and tremendous benefit for 6G. Furthermore, the spectral efficiency of NOMA enabled IoT network for 6G was further improved by \cite{khan2020spectral}. Different from conventional multiple access techniques, including frequency-division multiple access (FDMA), time-division multiple access (TDMA), code-division multiple access (CDMA) and orthogonal frequency division multiple access (OFDMA) for previous generations of cellular communications, NOMA allows all users to share the same frequency band and channel coding at the same time. By applying NOMA, high mutual interference will be introduced when the NOMA users are decoding signals. Thus, successive interference cancellation (SIC) is applied at the receiver \cite{saito2015performance}.
	
	For battery-powered devices, energy is one of the most precious resources. Thus, how to save their energy and/or improve their energy efficiency are emergent and important. This motivates the use of wireless power transfer (WPT) in this paper. The core idea of WPT in wireless communications is that WPT-enabled devices harvest energy from radio frequency (RF) signals. Generally, there are two types of WPT, i.e., time switching (TS) and power splitting (PS) WPT respectively \cite{zhang2013mimo}. In particular, for TS-WPT, the receiver periodically switches between harvesting energy mode and transmitting signals or decoding information mode \cite{choi2019toward,ding2015application2,lu2014wireless}, whereas the PS-WPT receiver splits the received signal into two power level streams and then assigns them to the energy harvesting receiver and information decoding receiver respectively \cite{cao2018analysis,ye2017power}. Note that, in this paper, we only consider the user of TS-WPT.
	\subsection{Related Works}
	In literature, D2D, NOMA and WPT were combined and studied in pairs or all together for different scenarios. The authors in \cite{ding2021harvesting} analysed the performance of a NOMA uplink network consisting of a single non-energy-constrained device and multiple energy-constrained WPT supported devices, which provides the research directions for WPT-NOMA. The authors in \cite{goktas2022wireless} maximized the uplink sum rate of multiple WPT-assisted devices in a single user downlink NOMA network, where the time switching coefficient and power allocation were alternatively optimized. In \cite{tang2019energy}, the energy efficiency of a downlink SWIPT-enabled NOMA system with TS-based terminals was maximized by jointly optimizing the time switching coefficients of terminals and the power allocation strategy of the BS. The authors in \cite{yu2021optimal} obtained the optimal power allocation scheme for a single-carrier single-uplink-user NOMA-enabled network by using convex optimization, where one D2D transmitter and two D2D receivers are taken into account. \cite{pan2017resource} optimized the resource allocation and channel assignment scheme in a NOMA downlink cellular network, where multiple D2D devices are deployed. A recent work \cite{khazali2021energy} applied WPT to two NOMA uplink users groups to improve energy efficiency and spectrum efficiency, where users in these two groups perform energy harvesting and signal transmission alternatively.
	\subsection{Motivation, Challenges and Contributions}
	To address the spectral and energy challenges in the 6G ultra-dense networks as aforementioned and motivated by the green communications attribute of WPT \cite{xue2022research,zhao2021special,allamehzadeh2021wireless}, naturally, we combine WPT, D2D communication and NOMA technique to realize spectrum and energy cooperation. In particular, the WPT-assisted Inband-Underlay D2D communication is deployed into NOMA downlink networks, of which WPT-assisted devices can harvest energy from NOMA downlink signal and can share the same spectrum with NOMA downlink users. Since energy efficiency is a very important performance index for WPT devices, this paper aims to improve the energy efficiency of the WPT-assisted D2D communication. Due to the fraction form of the problem, fractional programming is naturally selected. On the other hand, inspired by many works on applying learning-based methods to communication optimization problems, for example, \cite{zappone2018online} has performed the online energy-efficient power control in wireless networks by deep neural networks, this paper uses a deep reinforcement learning (DRL)-based approach to solve the problem. 
	
	To realize the spectrum and energy cooperation aforementioned, there are challenges that need to be overcome \cite{zhang2020envisioning}. First, due to the spectrum sharing protocol, severe co-channel interference to NOMA downlink users will be introduced when the D2D devices are appended to cellular networks. Therefore, interference control has to be carried out to guarantee the original cellular users' quality of service (QoS) when D2D devices are deployed. Second, the inaccuracy of channel estimation in practical systems makes the beamforming more challenging to design, and hence the robust beamforming design is important. 
	
	Different from \cite{pei2018energy} maximizing the energy efficiency of a D2D pair in a single-antenna BS NOMA uplink network, this paper maximizes the energy efficiency of the WPT-assisted D2D communication in a multiple-input single-output (MISO)-NOMA downlink system. The main contributions are summarised as follows:
	\begin{itemize}
		\item In this paper, we propose a novel scheme that can realize spectrum and energy cooperation, where a WPT-enabled D2D pair is inserted into a multi-user MISO-NOMA downlink network. Assume that the D2D transmitter adopts the harvesting energy then transmitting information strategy. This scenario can be extended to the legacy user-clustered hybrid NOMA networks \cite{you2020user} and is bound to appear in the process of future cellular networks upgrade to ultra-dense networks. 
		\item The formulated energy efficiency maximization problem is not concave and the two variables (beamforming vectors of the BS and time switching coefficient of the WPT device) are highly coupled in both the fractional objective function and constraints. To efficiently solve the non-concave problem, it first is simplified, and then an alternating algorithm, namely, pure fractional programming (PFP) is proposed. Specifically, the proposed problem is split into two subproblems to decouple the coupled variables. Afterwards, the time switching coefficient is optimized by applying the Dinkelbach method. For robust beamforming designing, the multi-dimension complex quadratic transform is used. Simulation results reveal that the proposed algorithm can converge perfectly for different schemes and channel assumptions. 
		\item A partial exhaustive search algorithm which can bypass the alternating operation is proposed as a benchmark to verify the PFP algorithm's optimality. Additionally, a deep reinforcement learning (DRL) approach (i.e., deep deterministic policy gradient (DDPG)) is applied to directly solve the non-concave energy efficiency maximization problem. Simulation results reveal a fascinating finding: the proposed PFP algorithm can provide better performance when channel estimation is accurate (perfect CSI or only minor error exists), while the DDPG-based algorithm has the capability to mitigate the adverse impact caused by channel estimation error.
		\item Simulations are also performed for the scenario when orthogonal multiple access (OMA) is applied. Simulation results illustrate that the proposed algorithm is also applicable to OMA. However, one can conclude is, in the examined system model, the WPT-assisted D2D communication in  MISO-NOMA downlink networks can obtain higher energy efficiency enhancement than that in MISO-OMA downlink networks.
	\end{itemize} 
	\subsection{Organization}
	The rest of this paper is arranged as follows. Section II describes the system model as well as the energy harvesting and information transmission strategy. The problem formulation and preliminary handling are discussed in section II. In section III, solutions to two subproblems, the proposed PFP algorithm and the partial exhaustive search algorithm are provided. Section IV introduces the DDPG algorithm and discusses its application to the original non-concave problem. In section V, the detailed deep neural networks structure and parameters, simulation parameters and simulation results are provided and analysed. In the end, we conclude this paper in section VII.  
	\section{System model and Problem formulation}
	\subsection{System Model}
	\begin{figure}[t]   
		\centering
		\includegraphics[width=0.8\linewidth]{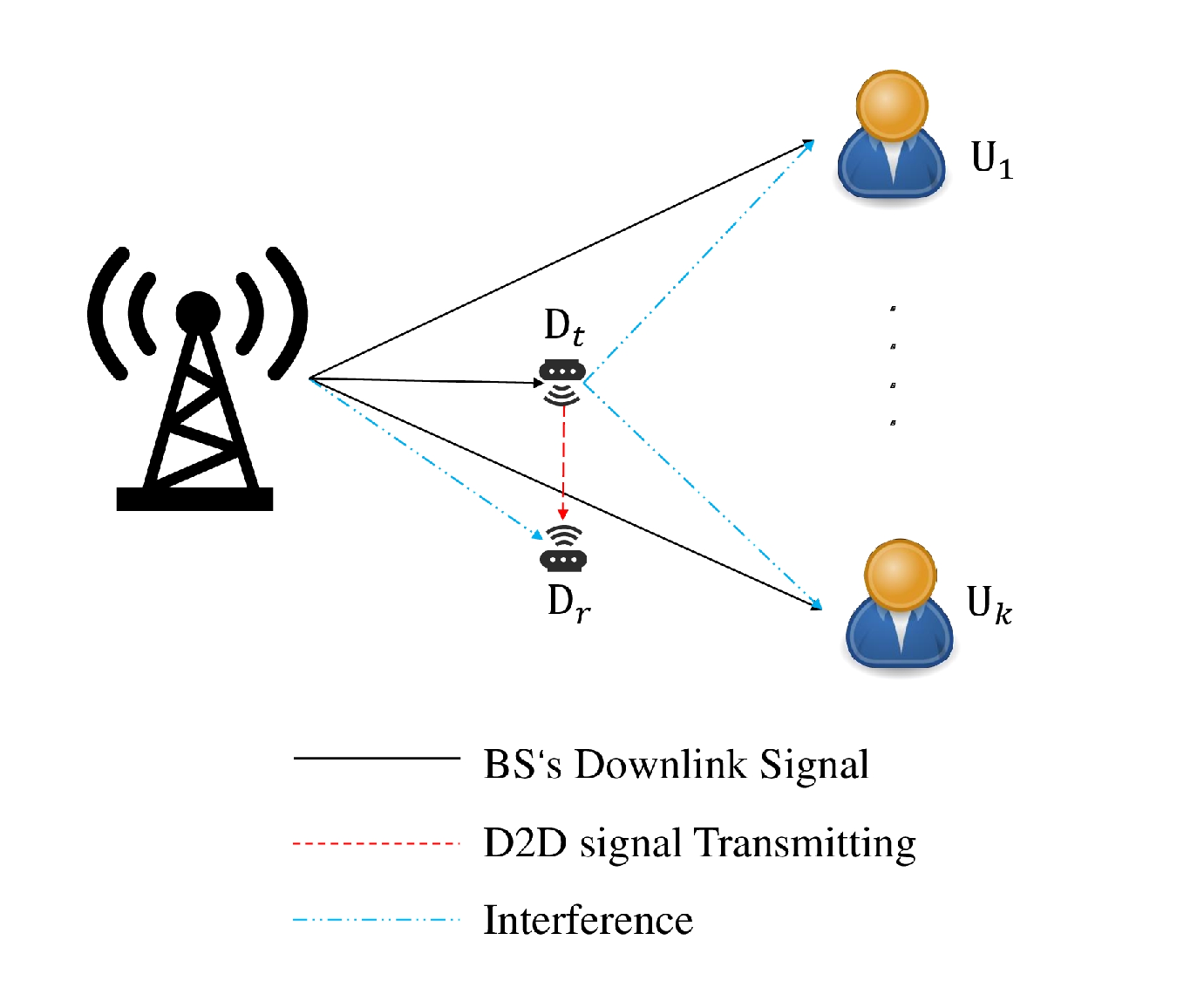}\\
		\caption{System model.}\label{system_model}
	\end{figure} 
	
	Consider a MISO-NOMA downlink network with WPT-assisted D2D communications as shown in Fig. \ref{system_model}. This network consists of a base station (BS), $K$ NOMA downlink users and a pair of D2D devices which are denoted by $D_t$ (the D2D signal transmitter) and $D_r$ (the D2D signal receiver) respectively, where the BS is equipped with $M$ antennas while all other nodes are equipped with a single antenna. This system can further be extended to the legacy user-clustered hybrid NOMA networks \cite{you2020user}. Assume that $D_t$ is a WPT-assisted device and apply the harvest-energy-then-transmit strategy. In detail, $D_t$ performs the energy harvesting and then stores it during the first $\tau T$ seconds by utilizing the BS transmitted downlink signal. During the rest $(1-\tau)T$ seconds, $D_t$ sends its signal $s_D$ to $D_r$ by using the harvested energy. 
	$\tau$ represents the time-switching coefficient ($0\leq \tau \leq 1$) and $T$ is the duration of one time slot. For simplicity, we set $T=1$ in this paper. 
	Downlink users required signals during $\tau$ and $1-\tau$ seconds are respectively denoted as $s_k^{(1)}$ and $s_k^{(2)}$, $E\{|s_k^{(1)}|^2\} =E\{|s_k^{(2)}|^2\} = 1$, where $E\{\cdot\}$ is the expectation operator. The add-on WPT-D2D pair is admitted to share the same communication resource with NOMA downlink users. Due to the channel estimation is not always perfect in practice, this paper model the channels with estimation errors as follows \cite{yoo2006capacity}:
	\begin{equation}
	\hath = \h + \bep,	
	\end{equation}
	where $\h$ is the estimated complex Gaussian channel with variance $\sigma^2$ and $\bep$ represents the channel estimation error with variance $\sigma_{\epsilon}^2$.  
	
	During the first stage (i.e., $\tau$ seconds), the BS transmits the superposition signal $s^{(1)} = \sum_{k=1}^{K} \om_k s_k^{(1)}$ to all $K$ downlink users while $D_t$ harvests energy from the BS with broadcast signal $s^{(1)}$. Therefore, $D_t$ does not interfere with downlink users receiving their signals during this stage. The $U_k$'s received signal at this stage is given by
	\begin{equation}
		y_k^{(\tau)} = \h_k^H \sum_{k=1}^{K} \om_k s_k^{(1)} + \bep_k^H \sum_{k=1}^{K} \om_k s_k^{(1)} + n_k,
	\end{equation}
	where $\h_k \in \mathbb{C}^{M \times 1}$ is the channel vector from the BS to the $U_k$, $\om_k \in \mathbb{C}^{M \times 1}$ is the beamforming vector for the $U_k$,  $\bep_k$ denotes the $U_k$'s channel estimation error and $n_k\thicksim \mathcal{CN} (0,\sigma^2)$ denotes the additive White Gaussian noise (AWGN).  
	
	In this paper, $U_1$ is defined as the weakest user whereas $U_K$ is the strongest user (i.e., $1\leq k<t\leq K$ for (\ref{rtktot})). In other words, the channel gains are sorted as $|\h_1|^2 \leq |\h_2|^ \leq \cdots \leq |\h_K|^2$. According to the SIC principle, a stronger user (who has better channel gain) can decode the signal of weaker users (who has worse channel gain). Denote $\kappa_k$ as the interference set when the signal of the $k$-th user is decoding. Therefore, the data rate of $U_t$ to decode $k$ weaker users is given by
	\begin{equation}\label{rtktot}
		R^{(\tau)}_{k \to t} = \tau \log(1+\frac{|\h_t^H\om_k|^2}{\sum_{j\in \kappa_k} |\h_t^H\om_j|^2+\Theta_t+\sigma^2}),
	\end{equation}
	 where $\Theta_t = \sum_{i = 1}^{K} |\bep_t^H\om_i|^2$ is the interference caused by channel estimation errors. After removing the weaker users' signal, $U_t$ can decode its own signal by simply treating other stronger users' signal as interference. Therefore, the data rate that $U_t$ to decode its own signal in this stage is given by
	\begin{equation}\label{rtktok}
		R^{(\tau)}_{t \to t} = \tau \log(1+\frac{|\h_t^H\om_t|^2}{\sum_{j\in \kappa_t} |\h_t^H\om_j|^2+\Theta_t+\sigma^2}).
	\end{equation}
	
	In the considered MISO-NOMA downlink system, assume that all the energy beams can be harvested and linear energy harvest strategy is applied \cite{chen2013energy}. Therefore, the received power at $D_t$ is given by  
	\begin{equation}
		P_r = \sum_{k=1}^{K} |\h_{D_t}^H\om_k|^2,
	\end{equation}
	where $\h_{D_t}$ is the channel vector from the BS to $D_t$. Denote the BS maximum transmit power by $P_{max}$. We have $\sum_{k=1}^{K}|\om_k|^2 \leq P_{max}$.
	Assume that the harvested energy will be totally used to transmit the signal, whereas the circuit needed energy is provided by the battery. Therefore, the transmit power of $D_t$ can be represented by
	\begin{equation}
		P_t = \frac{\eta \tau P_r  }{1-\tau}, 
	\end{equation}
	where $0\leq\eta\leq1$ is the RF energy conversion coefficient.
	
	During the second stage (i.e., the rest $1-\tau$ seconds), downlink users receive $s^{(2)}$ while $D_t$ transmits its signal to $D_r$. The transmitted signal from $D_t$ to $D_r$ is denoted by $s_D$. Since the D2D pair shares the same frequency band, the signal $s_D$ sent by $D_t$ will interfere with downlink users to receive the signal $s_k^{(2)}$. Therefore, the $U_k$'s received signal in the second stage is given by 
	\begin{equation} \label{yk1-tau}
		y_k^{(1-\tau)} = \sqrt{P_t}h_{dk}s_D +  \h^H_k\sum_{k=1}^{K}\om_ks_k^{(2)} + \bep_k^H \sum_{k=1}^{K} \om_k s_k^{(2)}+n_k,
	\end{equation}
	where $h_{dk}$ is the channel between the $D_t$ and the $U_k$. For the D2D pair, the received signal at the $D_r$ is given by
	\begin{equation}
		y_{D_r}^{(1-\tau)} = \sqrt{P_t}h_{dd}s_D + \sqrt{P_t}\epsilon_{dd}s_D+ \h^H_{D_r}\sum_{k=1}^{K}\om_ks_k^{(2)} + n_D,
	\end{equation}
	where $h_{dd}$ denotes the channel gain from $D_t$ to $D_r$, $\epsilon_{dd}$ represents the channel estimation error,  $\h_{D_r}\in\mathbb{C}^{M\times1}$ is the channel vector between the BS and $D_r$ and $n_D\thicksim \mathcal{CN} (0,\sigma^2)$ is the AWGN at $D_r$.
	
	Due to the double-fading effect and the fact that $D_t$'s transmitted signal is introduced unexpectedly, directly decoding the BS signal by treating $s_D$ as interference is an appropriate decoding strategy for downlink users. Hence, the data rate of $U_t$ to decode $U_k$s' signal can be written as
	\begin{equation} \label{r1tktot}
		\begin{aligned}
			R^{(1-\tau)}_{k \to t}& = (1-\tau) \times\\& \log(1+\frac{|\h_t^H\om_k|^2}{P_t|h_{dt}|^2+\sum_{j\in \kappa_k}|\h_t^H\om_j|^2+\Theta_t+\sigma^2}),
		\end{aligned}
	\end{equation}
	where $h_{dt}$ denotes the channel gain between $D_t$ and the $U_t$.
	After removing those weaker users' signal, the downlink data rate of the $U_t$ to decode its own signal is 
	\begin{equation} \label{r1tktok}
		\begin{aligned}
			R^{(1-\tau)}_{t \to t} &= (1-\tau) \times \\& \log(1+\frac{|\h_t^H\om_t|^2}{P_t|h_{dt}|^2+\sum_{j\in \kappa_t} |\h_t^H\om_j|^2+\Theta_t+\sigma^2}).
		\end{aligned}
	\end{equation}
	For the D2D receiver, all the downlink users' signals from the BS are treated as interference because it is not practical for battery-powered devices to perform the SIC with high complexity. The data rate that $D_r$ to decode $s_D$ is given by
	\begin{equation}
		R_D = (1-\tau) \log(1+\frac{P_t|h_{dd}|^2}{\sum_{k = 1}^{K} |\h_{D_r}^H\om_k|^2+ P_t|\epsilon_{dd}|^2+\sigma^2} ).
	\end{equation}
	The downlink data rate of the $U_k$ to decode its own signal and the data rate of $U_t$ to decode weaker $U_k$'s signal during a whole time slot are respectively denoted by
	\begin{equation} \label{rk}
		R_k = R^{(\tau)}_{k \to k}+R^{(1-\tau)}_{k \to k},
	\end{equation}
	\begin{equation}
		R_{k\to t} = R^{(\tau)}_{k \to t}+R^{(1-\tau)}_{k \to t}.
	\end{equation}
	Denote the power to drive the $D_t$'s circuit by $P_c$. The total energy consumption of $D_t$ during one time slot can be represented as follows: 
	\begin{equation}\label{Ec}
		E_c = (1-\tau)P_t + P_c = \eta \tau P_r + P_c.
	\end{equation} 
	\subsection{Problem Formulation}
	For WPT devices, utilizing energy effectively is extremely essential. Hence, the aim of this paper is to maximize the energy efficiency of the WPT-assisted D2D pair while guaranteeing the NOMA downlink user's data rate by jointly optimizing the time switching coefficient $\tau$ and beamforming vectors $\om$. The optimization problem can be formulated as:
	\begin{subequations} \label{P1}
		\begin{align}
			\textnormal{P1:} \max_{\{ \tau, \om \}} \quad & \frac{R_D}{E_c} \label{P1a}\\
			\mbox{s.t.} \quad & \min\{R_{k \to t}, R_{k \to k}\} \geq R_{min}, 1\leq k \leq t \leq K \label{P1b}\\ 		
			\quad &  \sum_{k=1}^{K} |\om_k|^2 \leq P_{max}, 1\leq k\leq K  \label{P1c}\\
			\quad & 	0\leq \tau \leq 1, \label{P1d}
		\end{align}
	\end{subequations} 
	where $R_{min}$ denotes the minimum target date rate of NOMA downlink users. (\ref{P1b}) is the QoS constraint which also guarantees that downlink users can implement SIC successfully. (\ref{P1c}) is the total power constraint of the BS, and (\ref{P1d}) is to restrict the time-switching coefficient in the feasible range. $\tau = 0$ indicates that the D2D transmitter transmits the signal in the whole time slot. In contrast, $\tau = 1$ means the D2D transmitter harvests energy in the whole time slot. 
	
	\begin{Lemma}
		$R_{k \to k} \geq R_{min}$ is equivalent to $\frac{1}{1-\tau} R^{(1-\tau)}_{k \to k} \geq R_{min}$, and (\ref{Rkt>Rmin}) can also be recast in the same way.
	\end{Lemma}
	\begin{proof}
		Define $\mathcal{A} = \log(1+\frac{|\h_k^H\om_k|^2}{\sum_{j\in \kappa_k} |\h_k^H\om_j|+\sigma^2})$, $\mathcal{B} =  \log(1+\frac{|\h_k^H\om_k|^2}{P_t|h_{dk}|^2+\sum_{j\in \kappa_k} |\h_k^H\om_j|+\sigma^2})$ and $\mathcal{R} = R_{min}$. Substitute (\ref{rtktok}) and (\ref{r1tktok}) into (\ref{Rkk>Rmin}). We have 
		\begin{equation}
			\tau \mathcal{A} + (1-\tau) \mathcal{B} \geq  \mathcal{R},
		\end{equation} 
		which is equivalent to 
		\begin{equation} \label{17}
			\tau (\mathcal{A}-\mathcal{R}) \geq  (\tau-1) (\mathcal{B}-\mathcal{R}).
		\end{equation}
		It can be observed that $\tau-1 \leq 0$ and $\mathcal{A} \geq \mathcal{B}$ are always held. Therefore, (\ref{17}) can be always satisfied if $\mathcal{B} \geq \mathcal{R}$ is held. 
	\end{proof}
	According to lemma 1, it can be observed that the constraint (\ref{P1b}) is equivalent to the following two constraints:
	\begin{equation}\label{Rkt>Rmin}
		\frac{1}{1-\tau}R_{k \to t}^{(1-\tau)} \geq R_{min},
	\end{equation}
	\begin{equation} \label{Rkk>Rmin}
		\frac{1}{1-\tau}R_{k \to k}^{(1-\tau)} \geq R_{min}.
	\end{equation}
	In particular, the constraint (\ref{P1b}) is equivalent to the following two inequations:
	\begin{equation}
		\begin{aligned}
			 \log(1+\frac{|\h_t^H\om_k|^2}{P_t|h_{dt}|^2+\sum_{j\in \kappa_k} |\h_t^H\om_j|^2+\Theta_t+\sigma^2})\geq R_{min}
		\end{aligned}		
	\end{equation} 
	and
	\begin{equation}
		\begin{aligned}
			 \log(1+\frac{|\h_k^H\om_k|^2}{P_t|h_{dk}|^2+\sum_{j\in \kappa_k} |\h_k^H\om_j|^2+\Theta_k+\sigma^2})\geq R_{min},
		\end{aligned}
	\end{equation} 
	where $\Theta_k  = \sum_{i=1}^{K}|\bep_k^H\om_i|^2$, $1\leq k \leq K$.
	By defining $\bt \triangleq \frac{\tau}{1-\tau}$, (\ref{Ec}) then can be recast as $E_c = \bt(\eta P_r+P_c)+P_c$. After some simple manipulation, the problem P1 can be further reduced to 
	\begin{subequations} \label{P2}
		\begin{align}
			\textnormal{P2:} &\max_{\{ \bt, \om \}} \quad  \frac{\log(1+\frac{ \bt\eta P_r |h_{dd}|^2}{\sum_{k = 1}^{K} |\h_{D_r}^H\om_k|^2+ P_t |\epsilon_{dd}|^2+\sigma^2} )}{\bt(\eta P_r+P_c) +P_c } \label{P2a}\\
			\mbox{s.t.} \quad & \frac{|\h_t^H\om_k|^2}{\bt \eta P_r |h_{dt}|^2 + \sum_{j\in \kappa_k}|\h_t^H\om_j|^2 +\Theta_t + \sigma^2} \geq \gamma_{min}, \label{P2b} \\ 
			& 1\leq k < t \leq K \nonumber, \nonumber \\
			\quad & \frac{|\h_k^H\om_k|^2}{\bt \eta P_r |h_{dk}|^2 + \sum_{j\in \kappa_k}|\h_k^H\om_j|^2 +\Theta_k +\sigma^2} \geq \gamma_{min}, \label{P2c}\\
			& 1 \leq k \leq K-1, \nonumber \\
			\quad &  \sum_{k=1}^{K} |\om_k|^2 \leq P_{max}, 1\leq k\leq K \label{P2d} \\
			\quad & 	\bt \geq 0, \label{P2e}
		\end{align}
	\end{subequations}
	where $\gamma_{min} = 2^{R_{min}}-1$. P2 is not a concave problem due to the non-concave objective function (\ref{P2a}) and two non-convex constraints (\ref{P2b}) and (\ref{P2c}). Because the time switching dependent variable $\bt$ and the beamforming $\om$ are highly coupled in the problem, P2 is difficult to obtain the optimal solution directly. In the next section, P2 is divided into two subproblems and a fractional programming based alternating algorithm is proposed to iteratively optimize $\bt$ and $\om$. 
	\section{Fractional Programming based Joint Robust Beamforming Design} 
	In this section, an alternating algorithm is proposed to tackle the non-concave problem.  In particular, the problem is divided into two subproblems, one is the time switching coefficient $\tau$ optimization by applying the Dinkelbach method \cite{dinkelbach1967nonlinear}, and the other is robust beamforming vectors designing by applying the complex multi-dimension quadratic transform\cite{shen2018fractional}. Each subproblem is analysed and converted from a non-concave form to a concave form. Following that, these tractable subproblems can be addressed in Matlab using convex optimization tools like CVX and fmincon. 
	\subsection{Time Switching Coefficient Optimization}
	For given beamforming vectors $\om = [\om_1, \cdots, \om_k], 1\leq k \leq K$, the problem P2 can be reduce to
	\begin{subequations} \label{P3}
		\begin{align}
			\textnormal{P3:} \max_{ \bt} \quad & \frac{\log(1+\frac{\bt A}{\bt B+C})}{\bt D + E} \label{P3a} \\
			\mbox{s.t.} \quad & \bt a_t + b_{k,t} \leq 0, 1\leq k < t \leq K, \label{P3b} \\ 		
			\quad &  \bt c_k + d_k \leq 0,  1\leq k\leq K,  \label{P3c}\\
			\quad & 	\bt \geq 0, \label{P3d}
		\end{align}
	\end{subequations}  
	where
	\begin{equation}
		\begin{cases}
			A = \eta |h_{dd}|^2\sum_{k=1}^{K} |\h_{D_t}^H\om_k|^2, & \\
			B = \eta |\epsilon_{dd}|^2\sum_{k=1}^{K} |\h_{D_t}^H\om_k|^2, &\\ 
			C = \sum_{k=1}^{K} |\h_{D_r}^H\om_k|^2 +\sigma^2, &\\
			D = \eta \sum_{k=1}^{K} |\h_{D_t}^H\om_k|^2+P_c, &\\
			E =  P_c, &\\
			a_t =\eta|h_{dt}|^2\sum_{k=1}^{K} |\h_{D_t}^H\om_k|^2, &\\
			b_{k,t} = \sum_{j\in \kappa_k}|\h_t^H\om_j|^2 +\Theta_t+\sigma^2 - \frac{|\h_t^H\om_k|^2}{\gamma_{min}}, &\\
			c_k = \eta |h_{dk}|^2\sum_{k=1}^{K} |\h_{D_t}^H\om_k|^2, &\\
			d_k = \sum_{j\in \kappa_k}|\h_k^H\om_j|^2 +\Theta_k+\sigma^2 - \frac{|\h_k^H\om_k|^2}{\gamma_{min}}.&\\
		\end{cases}
	\end{equation}
	\begin{Lemma}
		(\ref{P3a}) is a concave-convex function of $\bt$, regardless of whether channel estimation errors are existing.
	\end{Lemma}
	\begin{proof}
		\begin{itemize}
			\item[\textit{1)}] \textit{For the case imperfect CSI is obtained:}
		\end{itemize}
		Define $g(\bt) = \log(1+\frac{\bt A}{\bt B+C})$, $h(\bt)$ is a concave function of $\bt$ and its second-order derivative is given by
		\begin{equation}
			\frac{\partial^2g(\bt)}{\partial^2\bt} = -\frac{AC(2B^2\bt+2AB\bt+AC+2BC)}{\ln2((A+B)\bt+c)^2(B\bt+C)^2}.
		\end{equation}
	\item[\textit{2)}] \textit{For the case perfect CSI is obtained:}
		For this case, the numerator of (\ref{P3a}) is reduced to $\hat{g}(\bt) = \log(1+\bt \hat{A})$, where $\hat{A} = \frac{\eta P_r|h_{dd}|^2}{ P_i +\sigma^2}$. The second-order derivative of $\hat{g}(\bt)$ is given by
		\begin{equation}
			\frac{\partial^2\hat{g}(\bt)}{\partial^2\bt} = -\frac{\hat{A}^2}{\ln2 (1+\hat{A})^2}
		\end{equation}
	With non-negative $A$, $B$, $C$ and $\hat{A}$, the second-order derivative of $g(\bt)$ and $\hat{g}(\bt)$ are both non-positive. Therefore, $g(\bt)$ and $\hat{g}(\bt)$ are both concave function of $\bt$.
	\end{proof}
	According to lemma 2 and the fact that the denominator of (\ref{P3a}) is a linear function with respect to $\bt$. It can be observed that (\ref{P3a}) is a single-ratio concave-convex function of $\bt$, and hence Dinkelbach method can be applied to transform it into a concave function \cite{dinkelbach1967nonlinear}. 
	\begin{proposition}
		The maximum EE can be achieved when $F(q^*) = 0$, where $F(q)$ is defined as follows
		\begin{subequations} \label{P4}
			\begin{align}
				\textnormal{P4:} F(q) = \max_{ \bt} \quad & \log(1+\frac{\bt^* A}{\bt^* B+C})-q^*(\bt^*D + E ) \label{P4a} \\
				\mbox{s.t.} \quad & (\ref{P3b})-(\ref{P3d}),
			\end{align}
		\end{subequations} 
	\end{proposition}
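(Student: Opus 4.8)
The plan is to invoke the classical Dinkelbach equivalence for single-ratio fractional programming. I would write $N(\bt) = \log(1+\bt A)$ for the numerator and $D(\bt) = \bt B + C$ for the denominator of (\ref{P3a}), and let $\mathcal{F}$ denote the feasible set cut out by (\ref{P3b})--(\ref{P3d}). The crucial structural facts, which I would establish first, are that $D(\bt) \geq C = P_c > 0$ on $\mathcal{F}$ (so the ratio is well defined and the denominator never vanishes), and that $\mathcal{F}$ is a compact interval, since (\ref{P3b}) and (\ref{P3c}) each impose an upper bound $\bt \leq -E_{k,t}/D_t$ and $\bt \leq -G_k/F_k$ while (\ref{P3d}) imposes $\bt \geq 0$. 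Compactness guarantees that the inner maximum defining $F(q)$ in (\ref{P4a}) is actually attained for every $q$.

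Next I would record the two standard properties of the auxiliary value function $F(q) = \max_{\bt \in \mathcal{F}} \big(N(\bt) - q D(\bt)\big)$. First, $F$ is strictly decreasing: for $q_1 < q_2$, if $\bt_2$ attains $F(q_2)$ then
\begin{equation}
	F(q_1) \geq N(\bt_2) - q_1 D(\bt_2) > N(\bt_2) - q_2 D(\bt_2) = F(q_2),
\end{equation}
the strict inequality following because $D(\bt_2) > 0$. Second, $F$ is continuous, being a pointwise maximum of the affine-in-$q$ family $q \mapsto N(\bt) - q D(\bt)$ over the compact set $\mathcal{F}$. Since $D \geq P_c > 0$, one also has $F(q) \to +\infty$ as $q \to -\infty$ and $F(q) \to -\infty$ as $q \to +\infty$, so by the intermediate value theorem $F$ has a unique root $q^*$.

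Finally I would prove the equivalence $F(q^*) = 0 \Leftrightarrow q^* = \max_{\bt \in \mathcal{F}} N(\bt)/D(\bt)$ by a direct two-way argument. If $q^*$ is the optimal EE attained at $\bt^*$, then $N(\bt)/D(\bt) \leq q^*$ for all feasible $\bt$, i.e. $N(\bt) - q^* D(\bt) \leq 0$ with equality at $\bt^*$, whence $F(q^*) = 0$. Conversely, if $F(q^*) = 0$, then $N(\bt) - q^* D(\bt) \leq 0$ for all $\bt \in \mathcal{F}$ with equality at the maximizer $\bt^*$; dividing by $D(\bt) > 0$ gives $N(\bt)/D(\bt) \leq q^*$ for all feasible $\bt$ and $N(\bt^*)/D(\bt^*) = q^*$, so $q^*$ is exactly the maximum EE and it is achieved at $\bt^*$. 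The only genuine subtlety --- and the step I would be most careful with --- is the positivity and boundedness of the feasible set: the argument collapses if $D(\bt)$ could vanish or if the inner supremum in (\ref{P4a}) were not attained, so verifying $P_c > 0$ and the compactness of $\mathcal{F}$ up front is what makes the monotonicity and the intermediate-value step rigorous.
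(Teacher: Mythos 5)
Your proof is correct and follows exactly the classical Dinkelbach argument that the paper invokes by citation (its entire proof is ``Please refer to [Dinkelbach 1967]''), so you are taking the same route, merely spelling out the monotonicity/continuity of $F(q)$, the root existence, and the two-way equivalence that the reference supplies. Your extra care about $D(\bt)\geq P_c>0$ and the compactness of the feasible interval is a sound and welcome addition, not a deviation.
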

	where \begin{equation} \label{q}
		q^* =  \frac{\log(1+\frac{\bt^* A}{\bt^* B+C})}{\bt^* D + E}.
	\end{equation}
	\begin{proof}
		Please refer to \cite{dinkelbach1967nonlinear}.
	\end{proof}
	For a given $q$ the objective function (\ref{P4a}) is a concave function minus a convex function with respect to $\bt$, which yields a concave maximization problem. Therefore, the problem is reduced to a linear constraints concave problem and can be solved by convex optimization tools. In the end, the optimized time switching coefficient $\tau^*$ can be obtained by $\tau ^*= \frac{\bt^*}{1+\bt^*}$.
	
	\subsection{Robust Beamforming Design}
	The last subsection developed the Dinkelbach method to optimize the time switching coefficient $\tau$ under fixed beamforming vectors $\om$. This section focuses on the robust beamforming design by regarding $\tau$ as a constant. 
	
	For a given time switching coefficient $\tau$, the problem P2 can be recast as follows
	\begin{subequations} \label{P5}
		\begin{align}
			\textnormal{P5:} \max_{\{ \om \}}& \quad  \frac{\log(1+\frac{ \bt\eta P_r |h_{dd}|^2  }{P_i+ \Theta_{dd} + \sigma^2} )}{\bt(\eta \sum_{k = 1}^{K} |\h_{D_t}^H\om_k|^2+P_c) +P_c } \label{P5a}\\
			\mbox{s.t.} \quad & \frac{|\h_t^H\om_k|^2}{\bt \eta |h_{dt}|^2 P_r  + \sum_{j\in \kappa_k}|\h_t^H\om_j|^2 + \Theta_t+ \sigma^2 } \geq \gamma_{min},  \label{P5b}\\ \quad & 
			1\leq k < t \leq K, \nonumber \\ 
			\quad & \frac{|\h_k^H\om_k|^2}{\bt \eta |h_{dk}|^2 P_r  + \sum_{j\in \kappa_k}|\h_k^H\om_j|^2 +\Theta_k+\sigma^2}\geq \gamma_{min}, \label{P5c}\\ \quad &  1\leq k \leq K \nonumber \\ 		
			\quad &  \sum_{k=1}^{K} |\om_k|^2 \leq P_{max}, 1\leq k\leq K, \label{P53} 
		\end{align}
	\end{subequations}
	Note that $P_r = \sum_{k = 1}^{K} |\h_{D_t}^H\om_k|^2$, $P_i = \sum_{k = 1}^{K} |\h_{D_r}^H\om_k|^2$ and $\Theta_{dd} = \bt\eta\sum_{k = 1}^{K} |\h_{D_t}^H\om_k|^2 |\epsilon_{dd}|^2$. The problem $P5$ is not a concave optimization problem as the existing of the non-concave objective function (\ref{P5a}) and the two non-convex constraints (\ref{P5b}), (\ref{P5c}). Note that the Dinkelbach's method is no longer applicable as the objective function is not a concave-convex fractional form. In order to transform $P5$ to a tractable convex optimization problem,  the multidimensional and complex quadratic transform \cite{shen2018fractional} is applied. 
	\begin{Lemma}
		For two functions $\bA_m(\om): \mathbb{C}^{d_1} \to \mathbb{C}^{d_2}$ and $\bB_m(\om): \mathbb{C}^{d_1 } \to \mathbb{S_{++}}^{d_2\times d_2}$, $m \in \mathbb{N}_+$, the following equivalent can be established
		\begin{equation} \label{lemma1}
			\begin{aligned}
				&\sum_{m = 1}^{M} \bA_m^H(\om)\bB_m^{-1}(\om)\bA_m(\om)\\ & = \max_{z} \sum_{m=1}^{M} (2\tRe\{z_m^H\bA_m(\om)\} - z_m^H\bB_m(\om)z_m),
			\end{aligned}
		\end{equation}
		where $z_m$ are introduced auxiliary variables.
	\end{Lemma}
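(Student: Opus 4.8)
The plan is to reduce the matrix identity to a sequence of single-term, strictly concave quadratic maximizations and then solve each one by completing the square. The first observation is that the auxiliary variable $z_m$ enters only the $m$-th summand on the right-hand side, so the joint maximization over $z=(z_1,\dots,z_M)$ separates term by term:
\[
\max_{z}\sum_{m=1}^{M}\bigl(2\tRe\{z_m^H\bA_m(x)\}-z_m^H\bB_m(x)z_m\bigr)=\sum_{m=1}^{M}\max_{z_m}\bigl(2\tRe\{z_m^H\bA_m(x)\}-z_m^H\bB_m(x)z_m\bigr).
\]
Hence it suffices to prove, for a single index (dropping the subscript $m$ and the argument $x$), that $\max_{z}\,g(z)=\bA^H\bB^{-1}\bA$, where $g(z)=2\tRe\{z^H\bA\}-z^H\bB z$. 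At this point I would invoke the standing assumption that each $\bB_m$ is Hermitian positive definite; in the beamforming application these are interference-plus-noise quantities, so $\bB\succ0$, which guarantees that $\bB^{-1}$ exists and is itself Hermitian.

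The key step is the completing-the-square identity
\[
g(z)=-(z-\bB^{-1}\bA)^H\bB\,(z-\bB^{-1}\bA)+\bA^H\bB^{-1}\bA,
\]
which I would verify by expanding the quadratic and using $\bB\bB^{-1}=\bB^{-1}\bB=I$, the Hermitian property $(\bB^{-1}\bA)^H\bB=\bA^H$, and the collapse of the cross terms $z^H\bA+\bA^H z=2\tRe\{z^H\bA\}$. Because $\bB\succ0$, the first term is nonpositive and vanishes exactly when $z=z^{\star}=\bB^{-1}\bA$; therefore $g$ attains its global maximum $\bA^H\bB^{-1}\bA$ at $z^{\star}$. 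Substituting this value back into each summand and re-indexing over $m$ then recovers the stated equality. (An equivalent route is to set the Wirtinger gradient $\nabla_{z^*}g=\bA-\bB z$ to zero and substitute $z^{\star}=\bB^{-1}\bA$, but completing the square avoids complex differentiation and simultaneously certifies that the stationary point is the unique global maximizer.)

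The main delicacy I expect is the bookkeeping around the real-part operator and the positive-definiteness hypothesis rather than any deep difficulty. In the vector case that is actually used here — the ratios appearing in P5 are scalar-valued, so the relevant $\bA$ is a single column and $z^H\bA$ is a scalar — the step $z^H\bA+\bA^H z=2\tRe\{z^H\bA\}$ is exact, since $w+\bar w=2\tRe\{w\}$ for a scalar $w$. For the genuinely matrix-valued statement with $a>1$ one must read $\tRe\{\cdot\}$ as the Hermitian-part operator and re-derive the cross terms accordingly, which is the only place notational care is required. I would also make the assumption $\bB\succ0$ explicit, because without it the quadratic form $-z^H\bB z$ need not be bounded above and the claimed maximum could be $+\infty$ or unattained; it is precisely the completing-the-square argument, together with $\bB\succ0$, that upgrades the easy upper bound $g(z)\le\bA^H\bB^{-1}\bA$ to the tight equality asserted in the lemma.
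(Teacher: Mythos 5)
Your proof is correct and follows essentially the same route as the paper's: both reduce the identity to maximizing each concave quadratic $2\tRe\{z_m^H\bA_m\}-z_m^H\bB_m z_m$ separately, identify the maximizer $z_m^{\star}=\bB_m^{-1}\bA_m$, and substitute back, the only difference being that you certify global optimality by completing the square where the paper sets the derivative to zero and asserts concavity. Your explicit insistence on $\bB_m\succ 0$ is a worthwhile addition, since the paper leaves that hypothesis implicit even though it is exactly what makes the quadratic concave and the maximum attained (and finite).
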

	\begin{proof}
		Define $f(z_m) = 2\tRe\{z_m^H\bA_m(\om)\} - z_m^H\bB_m(\om)z_m$. Note that $f(z_m)$ is a linear function minus a quadratic function with respect to $z_m$ (i.e., a concave function in terms of $z_m$). Therefore, the maximum value of $f(z_m)$ can be achieved when $\frac{\partial f(z_m)}{z_m} = 0$. The optimal $z_m^*$ that can maximize $f(z_m)$ is $z_m^* = \bB_m(\om)^{-1}\bA_m(\om)$. Hence, the maximum value of $f(z_m)$ can be obtained by substitute $z_m^*$ into $f(z_m)$ (i.e., $f(z_m^*) = \bA_m(\om)^H\bB_m(\om)\bA_m(\om)$). The equivalence of (\ref{lemma1}) now is established. 
	\end{proof}
	\begin{proposition}
		The objective function (\ref{P5a}) is equivalent to the following concave form in terms of $\om$, \vspace{-0.5cm}
		\begin{equation}\label{fqq}
			\begin{aligned}
				f_{qq} (\bw, y, \bz) = & 2y(\log(1+\bt\eta|h_{dd}|^2  \sum_{k=1}^{K}(2\tRe\{ z_k^H\h_{D_t}^H\om_k\}\\
				&-z_k^H(P_i + \Theta_{dd} + \sigma^2)z_k)))^{\frac{1}{2}}\\
				& - y^2 (\bt\eta P_r +(1+\bt)P_c)
			\end{aligned}
		\end{equation}
		if the introduced auxiliary variables $y$ and $\bz = \{z_1, \cdots, z_k\}$ can satisfy (\ref{opty}) and (\ref{optz}) respectively. 
		\begin{equation} \label{optz}
			z_k^* = \frac{\h_{D_t}^H\om_k}{P_i+ \Theta_{dd}+\sigma^2}, 1 \leq k \leq K.
		\end{equation}
		\begin{equation}\label{opty}
			y^* = \frac{\sqrt{R(\bw)}}{E(\bw)},
		\end{equation}
		where $R(\bw)$ is given by (\ref{rw}) and $E(\bw) = \bt(\eta P_r+P_c) + P_c$. $\bw$ refers to the collection of $\{\om_k\}$. 
	\end{proposition}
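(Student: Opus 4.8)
The plan is to read the objective (\ref{P5a}) as a single ratio $R(\bw)/E(\bw)$ whose numerator itself hides a second fractional quantity, namely the interference term $P_r/(P_i+\sigma^2)$ sitting inside the logarithm. Accordingly I would apply the quadratic transform twice: once, via the matrix form of Lemma~1, to the inner ratio so as to concavify the argument of the log, and once, via the scalar instance of the same lemma, to the outer ratio so as to clear the fraction entirely. The two auxiliary sets $\bz$ and $y$ will be exactly the maximizers produced by these two transforms, which is how (\ref{optz}) and (\ref{opty}) will arise.

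First I would dispose of the inner ratio. Writing $P_r = \sum_{k=1}^{K}(\h_{D_t}^H\om_k)^H(\h_{D_t}^H\om_k)$ and noting that $P_i+\sigma^2$ is a common positive scalar, I invoke Lemma~1 in its scalar instance with $\bA_k = \h_{D_t}^H\om_k$ and $\bB_k = P_i+\sigma^2$, giving $\frac{P_r}{P_i+\sigma^2} = \max_{\bz}\sum_{k=1}^{K}\bigl(2\tRe\{z_k^H\h_{D_t}^H\om_k\} - z_k^H(P_i+\sigma^2)z_k\bigr)$. The proof of Lemma~1 already identifies the maximizer as $z_k^\ast = \h_{D_t}^H\om_k/(P_i+\sigma^2)$, which is precisely (\ref{optz}). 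Substituting this decoupled expression turns $R(\bw)$ into $\log\bigl(1+\bt\eta|h_{dd}|^2\sum_k(2\tRe\{z_k^H\h_{D_t}^H\om_k\}-z_k^H(P_i+\sigma^2)z_k)\bigr)$.

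Next I would handle the outer ratio. Since $R(\bw)\geq 0$, I write $R/E = (\sqrt{R})^2/E$ and apply the scalar real instance of the quadratic transform, $A^2/B = \max_y(2yA - y^2 B)$, with $A = \sqrt{R(\bw)}$ and $B = E(\bw)=\bt\eta P_r+(1+\bt)P_c$; the maximizer is $y^\ast = \sqrt{R(\bw)}/E(\bw)$, i.e. (\ref{opty}). Collecting both transforms reproduces exactly $f_{qq}(\bw,y,\bz)$ in (\ref{fqq}), and evaluating it at $y^\ast$ and $\bz^\ast$ returns $R(\bw)/E(\bw)$, which settles the claimed equivalence.

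It remains to verify concavity in $\om$ for fixed $y\geq 0$ and $\bz$, and this chain of monotone-composition arguments is where I expect the real care to be needed. With $\bz$ fixed, each summand $2\tRe\{z_k^H\h_{D_t}^H\om_k\} - z_k^H(P_i+\sigma^2)z_k$ is an affine term minus $|z_k|^2(\sum_j|\h_{D_r}^H\om_j|^2+\sigma^2)$, hence concave in $\om$; an increasing affine map keeps $1+\bt\eta|h_{dd}|^2(\cdot)$ concave and nonnegative, composition with the concave increasing $\log$ preserves concavity, and composition with the concave increasing square root preserves it once more, so multiplication by $y\geq 0$ leaves the first term concave. The final term $-y^2(\bt\eta\sum_k|\h_{D_t}^H\om_k|^2+(1+\bt)P_c)$ is the negative of a convex quadratic and hence concave, and a sum of concave functions is concave. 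The subtle point I would guard most carefully is that the square root is applied to a genuinely nonnegative concave function: this holds because at the relevant point the log argument coincides with $1+\bt\eta|h_{dd}|^2 P_r/(P_i+\sigma^2)\geq 1$, which legitimizes the concave-increasing composition rule throughout and thereby certifies that $f_{qq}$ is concave in $\om$.
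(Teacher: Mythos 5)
Your proposal is correct and follows essentially the same route as the paper: the scalar quadratic transform applied to the outer ratio $R(\bw)/E(\bw)$ yields $y^\ast$ in (\ref{opty}), Lemma~1 applied termwise to the inner SINR $P_r/(P_i+\sigma^2)$ yields $z_k^\ast$ in (\ref{optz}), and concavity in $\om$ follows by composing affine-minus-quadratic, $\log$, and $\sqrt{\cdot}$ as increasing concave maps. The only differences are the (immaterial) order in which you apply the two transforms and your explicit check that the $\log$ term is nonnegative before taking the square root, which the paper glosses over.
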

	\begin{proof} 
		First, we prove the equivalence between (\ref{P5a}) and (\ref{fqq}).
		In order to decouple the numerator and denominator of (\ref{P5a}), the single-ratio quadratic transform \cite{shen2018fractional} is first applied. 
		\begin{equation} \label{fq}
			f_q (\bw,y) = 2y \hR(\bw)^{\frac{1}{2}} - y^2E(\bw),
		\end{equation}
		where $\hR(\bw) = \log(1+\frac{ \bt\eta P_r |h_{dd}|^2  }{P_i+ \Theta_{dd} + \sigma^2} )$. (\ref{fq}) is equivalent to (\ref{P5a}) if $f_q (\bw,y)$ can achieve the maximum value with optimal $y^*$. The first-order derivative of (\ref{fq}) with respect to $y$ is $\frac{\partial f_q(y))}{\partial y} = 2\sqrt{\hR(\bw)} - 2yE(\bw)$. Since (\ref{fq}) is a quadratic function of $y$, its optimal $y^*$ can be obtain by letting $\frac{\partial f_q(y)}{\partial y} =0$, which yields (\ref{opty}). Substitute (\ref{opty}) into (\ref{fq}) the objective function (\ref{P5a}) is retrieved and the equivalence is established. 
		Although $-E(\bw)$ is concave due to its minus quadratic form with respect $\om$, (\ref{fq}) is still non-concave in terms of $\bw$, because the concavity of $\hR(\bw)$ is unprovable. To restore the concavity of $
		R(\bw)$, lemma 3 is applied. Thus, $\hR(\bw)$ can be recast to
		\begin{equation} \label{rw}
			\begin{aligned}
				R(\bw) = & \log(1+\bt\eta|h_{dd}|^2 \sum_{k=1}^{K} (2\tRe\{z_k^Hh^H_{D_t}\om_k\}\\
				&-z_k^H\sum_{k = 1}^{K} |\h_{D_r}^H\om_k|^2z_k)),
			\end{aligned}
		\end{equation}
		Similarly, (\ref{rw}) is equivalent to the numerator of (\ref{P5a}) when (\ref{rw}) can achieve its maximum value with optimal $\bz^*$ (i.e., when $z_k$ satisfies (\ref{optz})), where $\bz$ denotes the collection of $z_k$. By combining (\ref{fqq}), (\ref{optz}), (\ref{opty}), (\ref{fq}) and (\ref{rw}), the equivalence from (\ref{fqq}) to (\ref{P5a}) is proved.
		 
		Second, we show the concavity of (\ref{fqq}) in terms of $\bw$. To show (\ref{fqq}) is a concave function of $\bw$, it is sufficient to prove (\ref{rw}) is concave. Define $f(\bw) = \sum_{k=1}^{K} (2\tRe\{z_k^Hh^H_{D_t}\om_k\}-z_k^H\sum_{k = 1}^{K} |\h_{D_r}^H\om_k|^2z_k)$, its Hessian matrix is given by
		\begin{equation}
			\nabla^2f(\bw) = - 2\begin{bmatrix}
				 {z_1^*}^2\bm{H}_{D_r} & \bm{0} & \cdots & \bm{0}\\
				\bm{0} & {z_2^*}^2\bm{H}_{D_r}  & \quad & \vdots\\
				\vdots & \quad & \ddots & \bm{0}\\
				\bm{0} & \cdots & \bm{0} &{z_k^*}^2\bm{H}_{D_r} 
			\end{bmatrix}.
		\end{equation}
		Note that $\bm{H}_{D_r} = \h_{D_r}\h_{D_r}^H$ is a positive semidefinite matrix and ${z_k^*}^2 \geq 0$, therefore, $\nabla^2f(\bw)$ is negative semidefinite. As a result, $f(\bw)$ is a concave function. On the other hand, it can be observed that $\log(x), x \geq 0$ is non-decreasing and concave. According to \textit{operations that preserve convexity} for vector composition in \cite{boyd2004convex}, it can be proved that (\ref{rw}) is concave. Furthermore, it is worth to mention that the function $f(x) = x^{\frac{1}{2}}$ is also concave and non-decreasing. Hence, the first term of (\ref{fqq}) is concave with respect to $\bw$. Due to the negative quadratic form of $\om$, the second term of (\ref{fqq}) is also concave. The concavity of (\ref{fqq}) in terms of $\bw$ has been proved. 
	\end{proof}
	Similarly, Lemma 3 is also applicable to attain the concavity of constraints (\ref{P5b}) and (\ref{P5c}). They can be transformed into two concave sets with $\om$ which are shown as follows:   
	\begin{equation}\label{qt1}
		\begin{aligned} 
			&\frac{|\h_t^H\om_k|^2}{\bt \eta |h_{dt}|^2 P_r  + \sum_{j\in \kappa_k}|\h_t^H\om_j|^2 + \Theta_t+ \sigma^2 }\\   =&\max_{ \nu_{t,k}} 2\tRe\{\nu_{t,k}^H h_t^H \om_k\} - \nu_{t,k}^H \alpha_{t,k}\nu_{t,k} \geq  \gamma_{min}, 
		\end{aligned}
	\end{equation}
	\begin{equation}\label{qt2}
		\begin{aligned}
			&\frac{|\h_k^H\om_k|^2}{\bt \eta |h_{dk}|^2 P_r  + \sum_{j\in \kappa_k}|\h_k^H\om_j|^2 +\Theta_k+ \sigma^2}\\  =& \max_{ \mu_k} 2\tRe\{\mu_k^H h_k^H \om_k\} - \mu_k^H \beta_k\mu_k \geq \gamma_{min}, 
		\end{aligned}
	\end{equation}
	where $\alpha_{t,k} = |h_{dt}|^2\bt \eta P_r + \sum_{j\in \kappa_k} |h_t^H\om_j|^2+\Theta_k+\sigma^2$ and $\beta_k = |h^H_{dk}|^2 \bt \eta P_r + \sum_{j\in \kappa_k}|h_k^H\om_j|^2+\sigma^2$. $\nu_{t,k}$ and $\mu_k$ are two introduced auxiliary variables which can be updated by
	\begin{equation}\label{nu}
		\nu_{t,k}^* = \frac{h_t^H\om_k}{\alpha_{t,k}}, (1
		\leq t < k \leq K),
	\end{equation} 
	\begin{equation}\label{mu}
		\mu_k^* = \frac{h_k^H\om_k}{\beta_k}, (1\leq k \leq K).
	\end{equation} 
	Denote the collection of $\{\nu_{t,k}\}$ and $\{\mu_k\}$ by $\bm{\nu}$ and $\bm{\mu}$ respectively. By using (\ref{fqq}), (\ref{optz}), (\ref{opty}), (\ref{qt1}) and (\ref{qt2}), the problem $P5$ can be reformulated as
	\begin{subequations} \label{P6}
		\begin{align}
			\textnormal{P6:} &\max_{\{ \bw, y, \bz, \bm{\nu}, \bm{\mu} \}} \quad  f_{qq}(\bw, y, \bz ) \label{P6a}\\
			\mbox{s.t.} \quad & 2\tRe\{\nu_{t,k}^H \h_t^H \om_k\} - \nu_{t,k}^H \alpha_{t,k}\nu_{t,k} \geq \gamma_{min}, 1\leq t < k \leq K,\label{P6b}\\ 
			\quad & 2\tRe\{\mu_k^H \h_k^H \om_k\} - \mu_k^H \beta_k\mu_k \geq \gamma_{min}, 1\leq k \leq K-1 \label{P6c}  \\
			\quad &  \sum_{k=1}^{K} |\om_k|^2 \leq P_{max}, 1\leq k\leq K. \label{P6e} \\
			\quad & (\ref{opty}), (\ref{optz}), (\ref{nu}),(\ref{mu}). \label{P6f}
		\end{align}
	\end{subequations}
	For given $y$, $\bz$, $\bm{\nu}$, and $\bm{\mu}$, the (\ref{P6a}) is a concave function and constraints (\ref{P6b}) - (\ref{P6f}) are all convex set in regard to $\om$. Hence, problem P6 is a concave optimization problem \cite{boyd2004convex}, and therefore can be solved by convex optimization tools such as CVX or Matlab fmincon. The original con-concave energy efficiency maximization problem P1 has been solved by tackling the subproblems P4 and P6 alternately. The quadratic transform and Dinkelbach method based alternating algorithm, namely, pure fractional programming (PFP) is proposed to maximize the energy efficiency of WPT-assisted D2D communications in MISO-NOMA downlink networks. The proposed algorithm is summarised in Algorithm \ref{alg:1}. 
	\begin{algorithm}[t]
		\caption{Proposed PFP algorithm}
		\label{alg:1}
		\begin{algorithmic}[1]
			\STATE \textbf{Initialization:} Initialize $\bw$ and $\tau$ to a feasible value.
			\REPEAT
			\STATE Update $z_k$ by using (\ref{optz}).
			\STATE Update $y$ by using (\ref{opty}). 
			\STATE Update $\nu_{t,k}$ by using (\ref{nu}).
			\STATE Update $\mu_k$ by using (\ref{mu}).
			\STATE With fixed $z_k$, $y$, $\nu_{t,k}$ and $\mu_k$, solve the problem P6 and obtain the optimized $\bw$.
			\STATE With optimized $\bw$, update $q$ by using (\ref{q}).
			\STATE With fixed $q$, solve the problem P4 and obtain the optimized $\bt$
			\UNTIL The value of (\ref{fqq}) is convergent. 
		\end{algorithmic}  
	\end{algorithm}
	\subsection{A Partial Exhaustive Search based Algorithm}
	To further demonstrate the optimality of the proposed algorithm, this subsection provides a partial exhaustive search based algorithm to optimize $\tau$ for comparison. As discussed in the last subsection, the energy maximization problem can be transformed into a concave problem with respect to beamforming vectors $\om$ for a given time-switching coefficient $\tau$. Therefore, the solution can be obtained by solving P6 for all $\tau$ and selecting the one that corresponds to the maximum energy efficiency. The partial exhaustive search algorithm is summarised in Algorithm \ref{alg:2}. 
	\begin{algorithm}[t]
		\caption{Partial Exhaustive Search (PES) for $\tau$}
		\label{alg:2}
		\begin{algorithmic}[1]
			\STATE \textbf{Initialization:} Initialize $\bw$ and $\tau$ to a feasible value. Initialize the step size $\xi$.
			\FOR {$\tau$ = 0.001 : $\xi$ : 0.999}
			\REPEAT
			\STATE Update $z_k$ by using (\ref{optz}).
			\STATE Update $y$ by using (\ref{opty}).
			\STATE Update $\nu_{t,k}$ by using (\ref{nu}).
			\STATE Update $\mu_k$ by using (\ref{mu}).
			\STATE With fixed $z_k$, $y$, $\nu_{t,k}$ and  $\mu_k$, solve the problem P6 and obtain the optimized $\bw$.
			\UNTIL{The value of (\ref{fqq}) is convergent.}
			\ENDFOR
			\STATE Select the $\tau$ corresponding to the maximum (\ref{fqq}).
		\end{algorithmic}  
	\end{algorithm}
	\section{A reinforcement learning based approach to maximize the Energy Efficiency}
	In this section, a reinforcement learning based algorithm, deep deterministic policy gradient (DDPG), is first introduced. Afterwards, the structure of neural networks and training procedures are provided. At the end of this section, we discuss the application of DDPG to the proposed problem including the setup of action, state and reward, as well as constraints handling.
	\subsection{A Brief Introduction to DDPG:} 
	Reinforcement learning (RL) is neither like supervised learning uses an external supervisor labelled data set to learn, nor like unsupervised learning which aims to find the hidden structure in the unlabelled collections\cite{sutton2018reinforcement}, RL learns through the way that letting the agent interacts with the environment. Specifically, in RL, the agent decides what actions should be taken according to the current observation (also termed state) and then obtains the corresponding reward. Macroscopically, RL aims to find an optimal action that maximizes the reward. RL can generally be divided into two types which are value-based and policy-based respectively. Q-learning and state-action-reward-state'-action' (SARSA) are two typical value-based learning. They only solve the problem with low-dimension discrete actions. Policy gradient (PG), as a policy-based RL, can solve the problems with continuous actions. However, PG usually convergents at a local optimal and evaluates a policy inefficiently. The combination of Q-learning and deep neural networks (DNN) derives deep Q network (DQN), which is applicable to the problems when the state space and the discrete action space are enormous. To handle the problems with high-dimension continuous actions, DDPG is proposed by integrating DQN and PG \cite{2015Continuous}. By the fact that beamforming vectors and the time coefficient are high-dimension and continuous variables, DDPG is selected to solve the problem in this section.   
	\subsection{Exploration and Experience Replay}
	In DRL, inspired by the greedy strategy, noise is added to the actor network's output to encourage the agent to explore the surroundings, therefore the action to be taken for state $s^{(t)}$ is determined by \cite{2015Continuous}
	\begin{equation}
		a_t = \mu(s_t|\theta_{\mu'}) + \mathcal{N}_t.
	\end{equation}
	The same as DQN, DDPG applies experience replay to improve the training efficiency as well \cite{schaul2015prioritized}. To be more precise, all transitions $\{a,s,r,s'\}$ are first stored into the experience replay buffer and the training starts if the buffer is saturated. Afterwards, a certain number of transitions (also called mini-batch) are randomly selected to train those networks. 
	\subsection{Training Neural Networks:} 
	Different from other RL methods, DDPG has four neural networks: 
	\begin{itemize}
		\item An actor network $\mu(s|\theta_\mu)$: input current state $s$ then output action $a$.
		\item A critic network $Q(s,a|\theta_q)$: input current $a$ and $s$ then output Q-value.
		\item A target actor network $\mu'(s'|\theta_{\mu'})$: input state $s'$ then output target action $a'$.
		\item A target critic network $Q'(s',a'|\theta_{q'})$: input $a'$ and $s'$ then output target Q-value.
	\end{itemize}
	$\theta$ represents the parameters of the corresponding neural network. The mathematical expression of the training process is as follows. To train the actor network, the gradient ascend and chain rule are used for the Q-value function
	\begin{equation} \label{Qupdata}
		\nabla_{\theta_{\mu}}J =  \frac{1}{N_B} \sum_{t=1}^{N_B} (\nabla_aQ(s_t, \mu(s_t|\theta_{\mu})|\theta_q) \nabla_{\theta_{\mu}}\mu(s_t|\theta_{\mu})),
	\end{equation}
	where $N_B$ is the size of mini-batch. The critic network is trained by minimizing the loss between the current Q-value and target state-value 
	\begin{equation} \label{L}
		L = (y - Q(s,a|\theta_q))^2,
	\end{equation}
	where $y$ is the target value for the previous state-value which is given by
	\begin{equation} \label{y}
		y = r + \gamma Q'(s',\mu'(s'|\theta_{\mu'})|\theta_{q'}).
	\end{equation}
	$r$ represents the reward and $\gamma$ denotes the discount factor. It is worth to mention that the two target networks (target actor network and target critic network) have the same framework as their counterparts but the parameters update strategy is different. The two target networks' parameters $\theta_\mu'$ and $\theta_q'$ are updated by soft updating strategy: 
	\begin{equation} \label{soft}
		\theta_\mu' = \xi \theta_\mu + (1-\xi)\theta_\mu', \quad \theta_q' = \xi \theta_q + (1-\xi)\theta_q'.
	\end{equation}    
	where $\xi$ denotes the soft updating coefficient.
	\subsection{Application DDPG to the Problem:} 
	In this paper, the original problem P1 is solved directly\footnote{There is no further processing to the variables highly coupled non-convex problem including the objective function and constraints.} by DDPG. Suppose that the BS is the agent and it can observe the CSI and downlink users' data rate.\\
	\begin{itemize}
	\item[\textit{1)}] \textit{Action Space:} As the optimization needed variables, beamforming and time switching coefficient are naturally defined as the action. Note that all elements of beamforming vectors are complex numbers and the input vectors of neural networks should be real numbers. Hence, we need to split beamforming vectors into real parts and imaginary parts. The action at the $t$-th training step is given by
	\begin{equation}
		\begin{aligned} \label{at}
			a_t = &\left[\right. \bt^{(t)}, \textnormal{Re}\{\om_1^{(t)}\}, \\& \cdots, \textnormal{Re}\{\om_k^{(t)}\}, \textnormal{Im}\{\om_1^{(t)}\}, \cdots, \textnormal{Im}\{\om_k^{(t)}\} \left.\right].
		\end{aligned}
	\end{equation}
	\item[\textit{2)}] \textit{State Space:} The state vector is designed to represent as much information as possible about the current environment and the impact of the action on the system. For the proposed optimization problem, the state vector should include all CSI and all NOMA downlink users' data rates. The state vector at the $t$-th training step is defined as follows: 
	\begin{equation} \label{state}
		\begin{aligned}
			s_t=\left[\right. & |\h_1^{(t)}|^2, \cdots, |\h_k^{(t)}|^2, |\h_{D_t}^{(t)}|^2, |\h_{D_r}^{(t)}|^2, |h_{dd}^{(t)}|^2, \\
			&|\h_{d1}^{(t)}|^2, \cdots, |\h_{dk}^{(t)}|^2, R_1^{(t)}, \cdots, R_k^{(t)}, \mathcal{R}^{(t)}_{t,k},\\& |\om_1^{(t)}|^2, \cdots, |\om_k^{(t)}|^2 \left.\right],
		\end{aligned}
	\end{equation}
	where $\mathcal{R}_{t,k}^{(t)}$ denotes the collection of the data rate of user $t$ to decode user $k$, $1\leq k < t \leq K$.\\
	\item[\textit{3)}] \textit{Reward:} Our aim is to maximize the energy efficiency which can fit the goal of the DDPG algorithm is to maximize the reward. Therefore, the objective function (\ref{P1a}) is naturally defined as the reward.  
	\begin{equation} \label{reward}
		r_t = \left(\frac{R_D}{E_c}\right)^{(t)}
	\end{equation}
	\item[\textit{4)}] \textit{Constraint Handling:} For optimization problems, it is necessary to make optimized variables satisfy all constraints. Terms to DRL, it means the actions of the agent is needed to be restricted in a perspective region. How to let output actions efficiently satisfy constraints is a very important problem. To the best of our knowledge, the simple and brutal punishment mechanism is not an efficient method. Therefore, we combine the punishment mechanism and normalization processing to guarantee that all constraints can be satisfied. In particular, for (\ref{P1a}), we rewrite the reward (\ref{reward}) as follows:
	\begin{equation} \label{newreward}
		r_t = 
		\begin{cases}
			\left(\frac{R_D}{E_c}\right)^{(t)} & \forall R_{k \to t} \geq R_{min}\\
			-\zeta |R_{k \to t}^{(t)} - R_{min}| & \exists R_{k \to t}^{(t)} < R_{min}, 
		\end{cases}
	\end{equation}
	where $\zeta$ is the punishment factor. As can be seen, if all NOMA downlink users' data rate at $t$-th step can achieve the minimum target rate, the agent obtains a positive reward, otherwise, the agent is punished by a negative value which depends on how bad the action is. Inspired by the fact that all the PFP optimized beamforming vectors meet the equivalence of (\ref{P1c}), we apply normalization to the output beam vectors in each step to guarantee the power constraint (\ref{P1c}) can be guaranteed. At $t$-th training step, the normalized beamforming vectors can be represented as:  
	\begin{equation}
		\hat{\om}_k^{(t)} = \sqrt{\rho_k^{(t)}}\frac{\om_k^{(t)}}{|\om_k^{(t)}|^2},
	\end{equation}
	where $\rho_k^{(t)}$ is the power allocation coefficient and $\om_k^{(t)}$ represents the beamforming vectors outputted by action network. $\rho_k^{(t)}$ is given by
	\begin{equation}
		\rho_k^{(t)} = P_{max}\frac{|\om_k^{(t)}|^2}{\sum_{k=1}^{K}|\om_k^{(t)}|^2}.
	\end{equation}
	By this normalizing, the summation of all new beamforming vectors can always meet $\sum_{k=1}^{K}\hat{\om}_k^{(t)} = P_{max}$, and hence the (\ref{P1c}) is guaranteed. Meanwhile, $\hat{\om}_k^{(t)}$ remains the same direction with $\om_k^{(t)}$. For (\ref{P1d}), because the time switching coefficient $\tau$ has been converted to $\bt$ whose feasible range is synchronously shifted, we only need to map the first element of (\ref{at}) to the non-negative field by using some functions, such as abs($x$).
	\end{itemize}
	The detail of the DDPG algorithm is shown in Algorithm \ref{alg:3} and the framework of neural networks with their parameters setup is provided in the simulation section. 
		\begin{algorithm}[t]		
		\caption{DDPG-based algorithm}
		\begin{algorithmic}[1]\label{alg:3}
			\STATE \textbf{Initialization:} Randomly initialize the critic evaluation network $\theta_q$ and the actor evaluation network $\theta_{\mu}$. Initialize the critic target network $\theta_{q'} = \theta_q$ and the actor target network $\theta_{\mu'}=\theta_{\mu}$.
			
			Initialize the experience replay buffer $\mathcal{D}$ with capacity $\mathcal{C}$.
			
			Initialize the learning rate $\beta$, the discount factor $\lambda$, the soft update coefficient $\xi$ and the minibatch size $N_B$. 
			\FOR{episode $j = 1, \cdots, J$}
			\STATE Randomly initialize the time switching coefficient $\tau$ the beamforming vectors $\om^{(j)}$.
			\STATE Decide the NOMA downlink users' decoding order according to current channel gains.
			\STATE Obtain the initial state $s_1$ (\ref{state}).
			\FOR{step $t = 1, \cdots,T$}
			\STATE Initialize the random process $\mathcal{N}$ for action exploration.
			\STATE Choose action $a_t = \mu(s_{t-1}|\theta_{\mu}) + \mathcal{N}_t$.
			\STATE Extract corresponding actions to retrieve beamforming vectors and normalize them.
			\STATE Obtain the current state $s_t$.
			\STATE Set $r_t$ according to (\ref{newreward}).
			\STATE Store transition $\{s_t,a_t,r_t,s_{t+1}\}$ into the replay buffer $\mathcal{D}$.
			\STATE Sample $N_B$ minibatch transitions from $\mathcal{D}$ to train.
			\STATE Calculate target Q value by the equation (\ref{y}).
			\STATE Update the critic evaluation network $Q(s,a|\theta_q)$ by minimizing the loss function (\ref{L}).
			\STATE Update the actor evaluation network $\mu(s|\theta_{\mu})$ by using the sampled policy gradient in (\ref{Qupdata}) .
			\STATE Update two target networks by using soft update (i.e.(\ref{soft})).
			\STATE Transfer state $s_t$ to $s_{t+1}$.
			\ENDFOR
			\ENDFOR
		\end{algorithmic}  
	\end{algorithm}
	\begin{Remark}
		As we mentioned in section II, the solution obtained via PFP and DDPG for the considered system model can also be applied to the legacy user-clustered hybrid NOMA downlink networks \cite{you2020user}. For example, if multiple D2D pairs are added into a user-clustered hybrid NOMA downlink system, where one D2D pair is assigned to each cluster, the solution that this paper provided can be used for each cluster. If multiple D2D transmitters and a single receiver are deployed, where one transmitter is assigned to each cluster, the proposed algorithms are still applicable and the D2D communication can be viewed as NOMA uplink.
	\end{Remark}
	\section{Simulation results}
	In this section, we study the performance of the proposed algorithm and DDPG-based algorithm to maximize the energy efficiency of WPT-assisted D2D communications in MISO-NOMA downlink networks. The simulation results for the same communication scenario in MISO-OMA networks are also provided. 
	\subsection{Deep Neural Networks Parameters and Structure Setup}
	Fully connected neural networks are used for both actor networks and critic networks. In actor networks, one input layer, two hidden layers and one output layer are employed, where the rectified linear activation function (ReLU) is used after the first hidden layer and the hyperbolic tangent function (tanh) is used for both the second hidden layer and the output layer. Due to there are two inputs (i.e., state and action) in critic networks, two parallel individual hidden layers are also needed to receive the two input layers' output. Then, the two outputs are concatenated and connected to another hidden layer. In critic networks, only ReLU is applied after all the layers. The number of neurons is 500 for both actor networks and critic networks. Batch normalization is also used for both actor and critic networks with $N_B = 32$ to improve the training performance. Adam optimizer is selected and the learning rate is set to 0.001 for actor networks and 0.002 for critic networks. The soft update coefficient is set to 0.01.
	\subsection{Hyper Parameters Setup}
	In simulations, we assume that the positions of all D2D devices and downlink users are randomly distributed with the region $x,y \in [3,8]$. The BS is deployed at $(0,0)$. Channels are assumed to be the Rayleigh fading and the path loss is also considered. Therefore, the channels can be expressed as
	\begin{equation}
		\h_{sim} = \frac{\h_{Ray}}{\sqrt{d^\alpha}},
	\end{equation}
	where $h_{Ray}$ represents the Rayleigh channel vector, and $d$ and $\alpha$ are the corresponding distance and path loss coefficients, respectively. We set the path loss coefficient between the BS and downlink users to $\alpha_0 = 2.5$ and the path loss coefficient between $D_t$ and $D_r$ is $\alpha_1 = 2$. $\alpha_2 = \alpha _3 = 3.5$ are the path loss coefficient between the BS and $D_t$, and between $D_t$ and downlink users, respectively. For all simulations, the noise power is set to $\sigma^2 = -94$ dBm and the RF energy conversion coefficient is $\eta = 0.1$.
	\subsection{Simulation Results Demonstration}
	\begin{figure}[t]  
		\centering
		\includegraphics[width=1\linewidth]{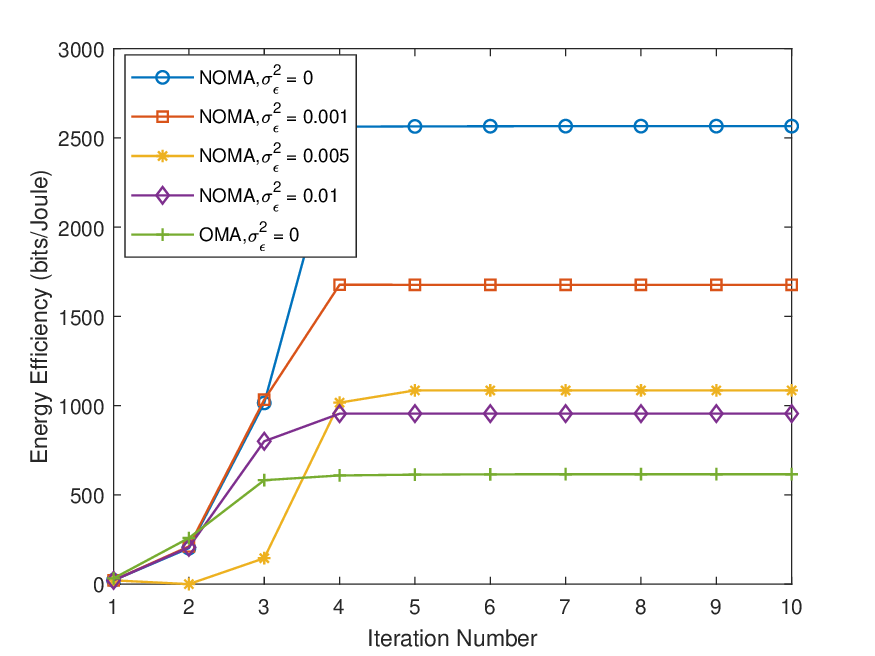}\\
		\caption{Convergence of PFP algorithm for NOMA and OMA with different channel estimation errors. $P_{max} = 20$ dBm, K = 3, M = 16 and $R_{min} = 0.1$ bps/Hz.}\label{EE_vs_iter}
	\end{figure} 
	Fig. \ref{EE_vs_iter} shows the convergence of the proposed algorithm for both the NOMA and OMA schemes, of which the transmit power $P_{max} = 20$ dBm, number of antennas $M = 16$ and number of downlink users $K = 3$. It can be observed that the proposed algorithm can converge very fast for both NOMA and OMA schemes, regardless of whether the channel estimation error is existing. In particular, the maximum value of energy efficiency can be achieved within 5 iterations. On the other hand, this figure preliminary demonstrates the superiority of NOMA. 
	\begin{figure}[t]  
		\centering
		\includegraphics[width=1\linewidth]{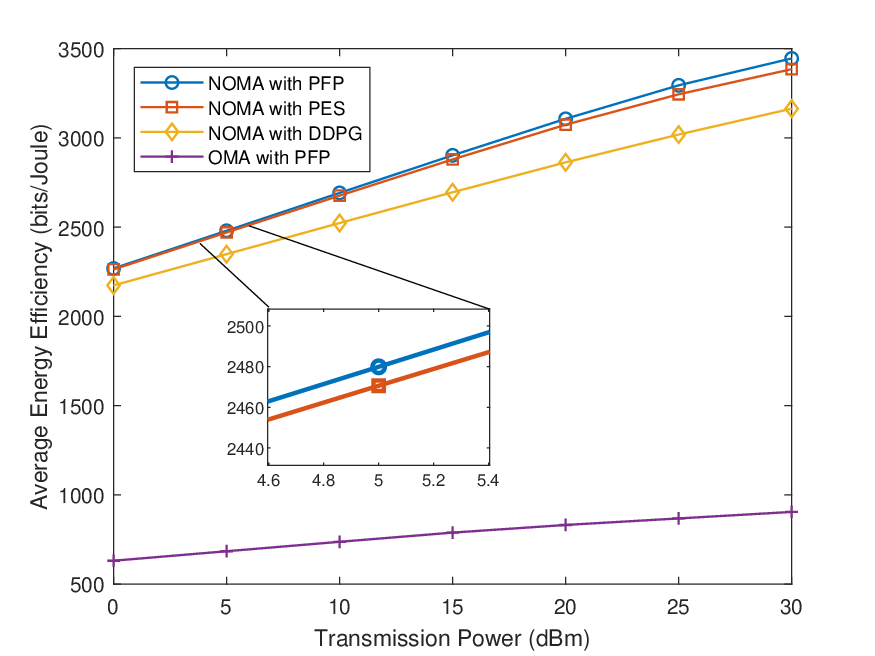}\\
		\caption{Performance demonstration for different algorithms and multiple access schemes, where perfect CSI is obtained. K = 3, M = 16 and $R_{min} = 0.1$ bps/Hz.}\label{Power_vs_EE}
	\end{figure} 

	Fig. \ref{Power_vs_EE} presents the energy efficiency versus transmit power of the BS by applying different algorithms and multiple access schemes. In this figure, we assume that the perfect CSI can be obtained (i.e.,$\sigma_{\epsilon}$ = 0). The number of users and antennas are set to $K = 3$ and $M = 16$, respectively. The minimum target data rate is set as $R_{min} = 0.1$ bps/Hz. The randomness caused by the randomly generated positions and channels is averaged by performing Monte Carlo simulations. It can be observed that the energy efficiency of the WPT supported D2D pair increases with the increase of the BS's transmit power for all algorithms. However, the performances that different schemes can provide are significantly different. In this simulation, we choose the step size of exhaustive search $\xi = 0.1$. It can be seen that the performance of the partial exhaustive search for $\tau$ is slightly worse than the proposed algorithm, which further verified the optimality of the proposed PFP algorithm. The gap between the partial exhaustive search and the proposed algorithm becomes larger when the transmit power increases. This figure also shows the comparison between the proposed PFP and the DDPG-based optimization. It can be observed that, with perfect CSI, the proposed PFP algorithm outperforms the DDPG-based algorithm. Furthermore, Fig. \ref{Power_vs_EE} shows that under the same algorithm optimisation and network framework, the energy efficiency performance of the WPT-assisted D2D communication in the MISO-NOMA system outperforms in the MISO-OMA system significantly. This benefits from the characteristic of the NOMA system that allows all communication resources to be shared. 
	\begin{figure}[t]  
		\centering
		\includegraphics[width=1\linewidth]{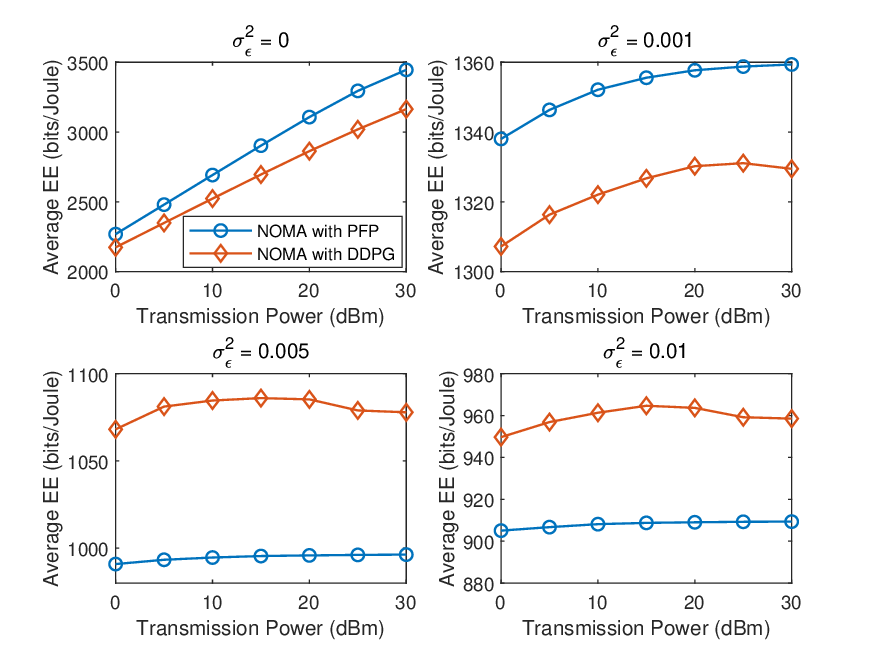}\\
		\caption{Performance comparison between PFP and DDPG for different channel estimation accuracy, where $K = 3$, $M = 16$ and $R_{min} = 0.1$ bps/Hz.} \label{imperfect CSI}
	\end{figure}

	Fig. \ref{imperfect CSI}  illustrates the different performances that can provide by the proposed PFP algorithm and DDPG when the channel estimation accuracy is various. In these simulations, we set $K = 3$, $M = 16$ and $R_{min} = 0.1$ bps/Hz. Channels (i.e., $\h_k$ and $h_{dd}$)  and channel estimation errors (i.e., $\be_k$ and $\epsilon_{dd}$) are used the same for both the proposed algorithm and DDPG. An interesting and important observation is that if channel estimation is perfect or only has slight errors (i.e., $\sigma_{\epsilon} = 0$ and $\sigma_{\epsilon} = 0.001$), the proposed algorithm outperforms DDPG-based algorithm, however, when channel estimation error is severe (i.e., $\sigma_{\epsilon} = 0.005$ and $\sigma_{\epsilon} = 0.01$), the DDPG-based algorithm can provide its better robustness to mitigate the channel estimation error caused performance degradation. On the other hand, unlike the scenario where channel estimation is perfect, the slope decreases with the increase of channel estimation error. This is because when the BS's transmit power increases, the power of channel estimation error caused interference increases as well. Therefore, communication resources will need to be tilted toward downlink users more to guarantee their QoS, which hinders the improvement of the energy efficiency of the D2D pair.
	\begin{figure}[t]  
		\centering
		\includegraphics[width=1\linewidth]{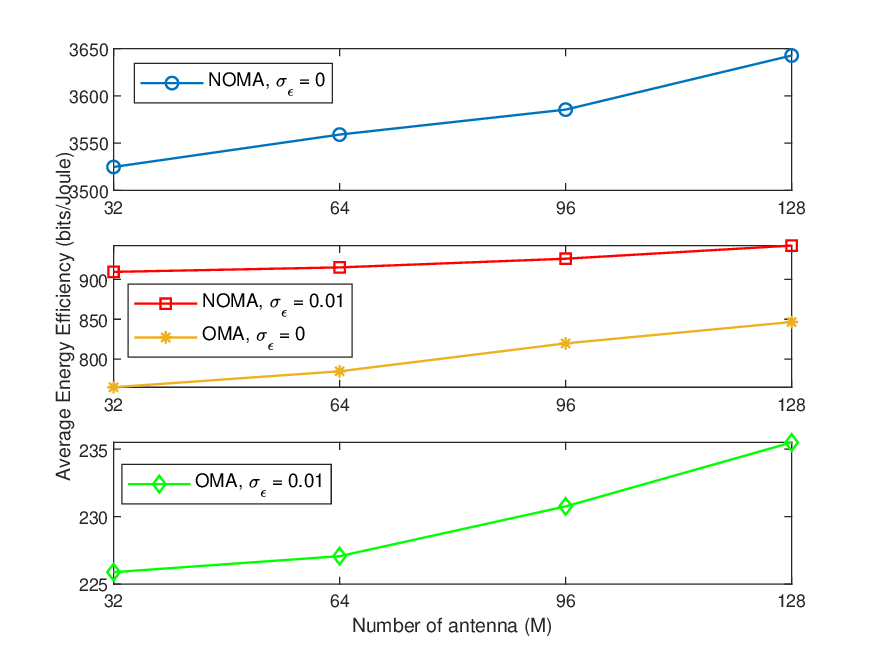}\\
		\caption{Impact of the number of antenna on the performance of the proposed PFP algorithm with NOMA and OMA in different channel estimation errors. K = 3 $P_{max} = 30$ dBm and $R_{min} = 0.1$ bps/Hz}\label{Power_vs_EE(diffM)}
	\end{figure}
	
	Fig. \ref{Power_vs_EE(diffM)} shows the energy efficiency versus the number of antennas. In this simulation, the parameters are set as follows: $K = 4$ and $R_{min} = 0.1$ bps/Hz. To clearly demonstrate differences between different simulations, we plot curves separately into three sub-figures. It can be observed that the NOMA scheme significantly outperforms OMA scheme, even severe channel estimation error is introduced. Benefiting from the spatial diversity, deploying more antennas results in higher D2D pair energy efficiency. However, simply increasing the number of antennas is might not a wise and efficient scheme to improve energy efficiency. Hence, the trade-off between cost and performance improvement is crucial and needs to be considered when designing the system.
	\begin{figure}[t]  
		\centering
		\includegraphics[width=1\linewidth]{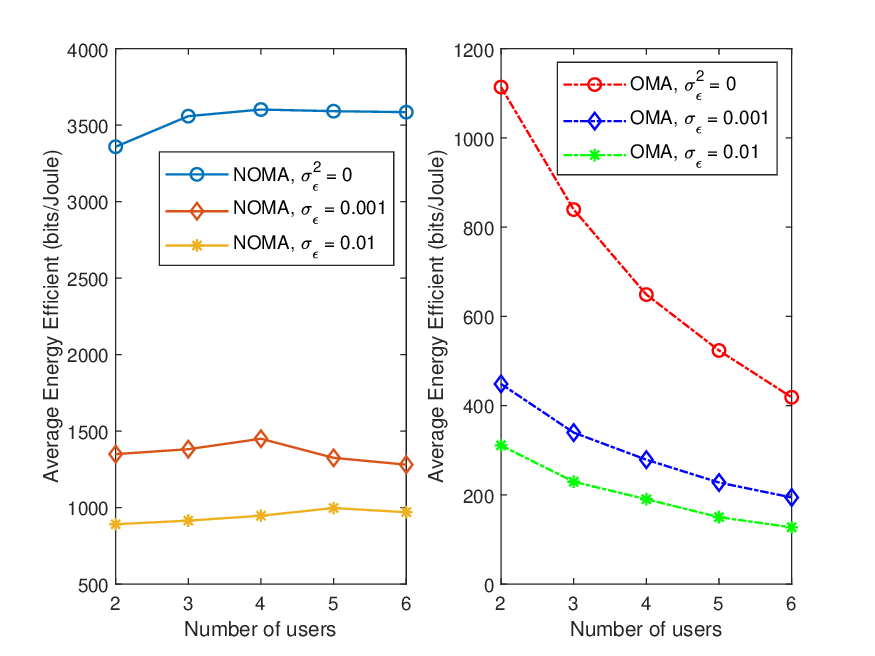}\\
		\caption{Impact of the number of downlink users on the performance of the proposed algorithm with NOMA and OMA in different channel estimation errors, where M = 64 $P_{max} = 30$ dBm and $R_{min} = 0.1$ bps/Hz}\label{KvsEE}
	\end{figure}

	Fig. \ref{KvsEE} illustrates how the number of downlink users impacts the energy efficiency performance of WPT-assisted D2D devices. In this simulation, the number of antennas is M = 64, transmit power is $P_{max} = 30$ dBm and the minimum QoS is $R_{min} = 0.1$ bps/Hz. As can be seen, for the NOMA scheme, with the increase of the number of users the average energy efficiency slightly increases first and then decrease, which can be observed for all $\sigma_{\epsilon}^2$ cases. It is worth to point out that the slight decrease of the case $\sigma_{\epsilon}^2 = 0$ starts from K = 4. Due to the plotting scale, the decrease is not obvious. On the other hand, for OMA scheme, dramatic energy efficiency degradation occurs in all cases. This is because, in OMA networks, more OMA devices result in fewer resources being allocated to each device.
	\begin{figure}[t]  
		\centering
		\includegraphics[width=1\linewidth]{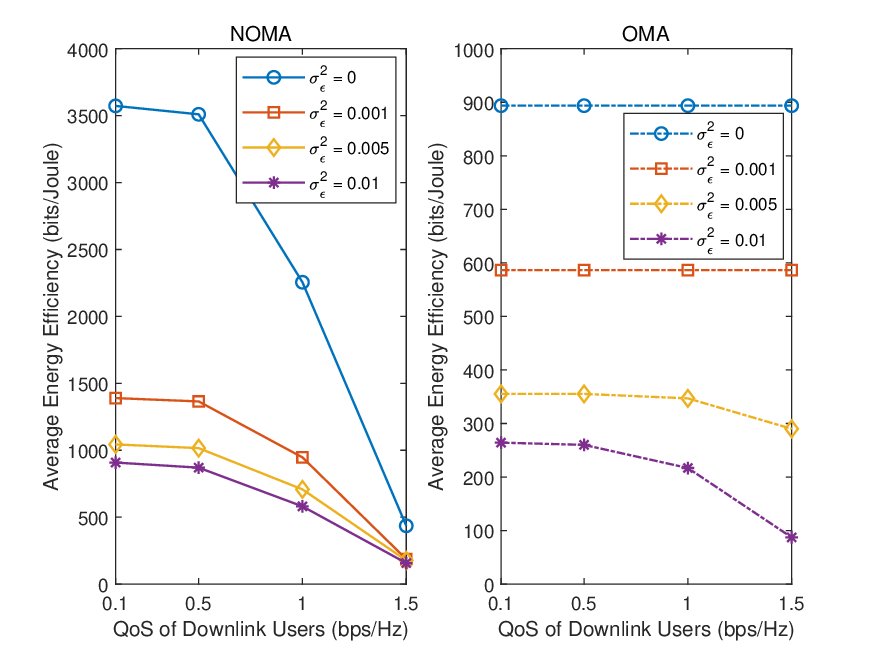}\\
		\caption{Impact of the downlink users' QoS on the performance of the proposed algorithm with NOMA and OMA in different channel estimation errors, where K = 3, M = 64 and $P_{max} = 30$ dBm}\label{RminvsEE}
	\end{figure}
	
	In Fig. \ref{RminvsEE}, the impact of increased downlink users' QoS on the average energy efficiency of WPT-assisted D2D communication is studied. In the MISO-NOMA downlink network, if downlink users' require higher QoS, the energy efficiency of WPT-assisted D2D communications will significantly decrease and the more accurate the channel estimation, the higher the percentage of performance degradation. However, for MISO-OMA downlink networks, the increase of downlink users' QoS impacts the energy efficiency performance of WPT-assisted D2D communications slightly. In particular, when channel estimation is perfect or only has small errors (e.g., $\sigma_{\epsilon}^2 = 0$ or $0.001$), there is no performance degradation if the minimum target rate required by OMA downlink users is increased from 0.1 bps/Hz to 1.5 bps/Hz. When channel estimation is not accurate (i.e.,$\sigma_{\epsilon}^2 = 0.005$ or $0.01$) the increase of OMA downlink users QoS also decreases the energy efficiency of the WPT-assisted D2D communication. Overall, although the increased downlink users' QoS severely affects the WPT-assisted D2D communication in the NOMA downlink network, the NOMA scheme is still a better choice.
	\section{Conclusion}
	In this paper, the proposed PFP algorithm and DDPG-based algorithm are both applied to do the joint robust beamforming design for the WPT-assisted D2D communication in MISO-NOMA downlink networks. The goal is to maximize the energy efficiency of the WPT-enabled D2D devices. To solve the proposed non-concave optimization problem, the PFP algorithm has been proposed to alternatively optimize the beamforming vectors and time switching coefficient. Furthermore, a partial exhaustive search based algorithm has been proposed to prove the PFP algorithm's optimality. The DDPG-based algorithm was performed directly to solve the proposed non-concave problem. Simulations were carried out for both NOMA and OMA schemes with different channel estimation accuracy. In the considered communication scenario, one can conclude is WPT-assisted D2D communication can provide higher energy efficiency if the NOMA scheme is adopted. Another intriguing and important finding is that the proposed PFP algorithm is superior to the DDPG-based algorithm when the perfect CSI can be obtained or just minor errors exists. However, when channel estimation is unsatisfactory, the DDPG-based algorithm is more robust than the PFP algorithm. Based on the finding above, investigating deeper into the causes will be an important direction for our future work. Furthermore, linear energy harvesting has been assumed, and the impact of the energy used for radio frequency circuits and signal processing has not been considered, which prompts us to gain more insight into a more practical model. Alternatively, for realizing battery-less D2D communication, backscattering communication (BackCom) is another mature and efficient scheme that will be another direction for future research.

	\vspace{12pt}
	\bibliographystyle{IEEEtran}
	\bibliography{myrefsletter.bib}
\end{document}